\newtheorem{thm}{Theorem}
\newtheorem{lem}{Lemma}
\newtheorem{prop}[thm]{Proposition}
\newtheorem{cor}{Corollary}
\newenvironment{customTheorem}[1]
  {\innercustomthm}
  {\endinnercustomthm}
\newtheorem{dfn}{Definition}
\newcommand{\id}{\mathbbm{1}}
\newcommand{\diag}{\operatorname{diag}}
\title{\vspace{-15mm}%
	Sinkhorn normal form for\\ unitary matrices}
\author{%
	\large
	\textsc{Martin Idel, Michael M. Wolf} \\[2mm]
	\normalsize	Zentrum Mathematik, Technische Universit\"{a}t M\"{u}nchen \\
	\vspace{-5mm}
	}
\date{}
\begin{document}

\maketitle

\begin{abstract}
Sinkhorn proved that every entry-wise positive matrix can be made doubly stochastic by multiplying with two diagonal matrices. In this note we prove a recently conjectured analogue for unitary matrices:  every unitary can be decomposed into two diagonal unitaries and one whose row- and column sums are equal to one. The proof is non-constructive and based on a reformulation in terms of symplectic topology. As a corollary, we obtain a decomposition of unitary matrices into an interlaced product of unitary diagonal matrices and discrete Fourier transformations. This provides a new decomposition of linear optics arrays into phase shifters and canonical multiports described by Fourier transformations. \end{abstract}
	
\section{Introduction}
For every $n\times n$ matrix $A$ with positive entries there exist two diagonal matrices $L,~R$ such that $LAR$ is doubly stochastic, i.e. the entries of each column and row sum up to one. This result was first obtained by Sinkhorn \cite{sin64}, who also gave an algorithm how to compute $L$ and $R$ by iterated left and right multiplication of diagonal matrices. 

Recently, De Vos and De Baerdemacker studied the same problem for unitary matrices \cite{vos14a}. They conjectured that for every $n\times n$ unitary $U$ there exist two unitary diagonal matrices $L, R$ such that $LUR$ has all row and column sums equal to one. To support their conjecture, they construct an algorithm similar to the iteration procedure for matrices with positive entries from \cite{sin64,sin67}. They also provide numerical evidence that the algorithm always converges to a unitary matrix with row and column sums equal to one. 

The goal of this paper is to prove the conjecture of De Vos and De Baerdemacker that such a normal form always exists by reformulating the problem in terms of symplectic topology. It turns out that the reformulated problem is a special case of the Arnold (sometimes Arnold-Givental) conjecture on the intersection of Lagrangian submanifolds \cite{mcd98}, which was solved for this case in \cite{bir04, cho04}. More precisely, in section \ref{sec:unit} we show:
\begin{customTheorem}{2} 
For every unitary matrix $U\in U(n)$ there exist two diagonal unitary matrices $L,R\in U(n)$ such that $A:=LUR$ satisfies $\sum_j A_{ji} =\sum_j A_{ij} =1$ for all $i=1,\ldots n$.
\end{customTheorem}
For a given unitary $U\in U(n)$ the triple $(L,R,A)$ is certainly not unique, since multiplying $L$ by a global phase and $R$ by its inverse does not change $LAR$. Hence, it makes sense to consider the decomposition $U=e^{i\varphi}L^{\prime}AR^{\prime}$, where $L^{\prime},R^{\prime}$ are unitary diagonal such that $L^{\prime}_{11}=R^{\prime}_{11}=1$ and $\varphi\in[0,2\pi)$. In particular, for $U(2)$, a simple complete solution was given in \cite{vos14a} from which one can see that for every non-diagonal matrix, there are only two different $A$ such that $e^{i\varphi}LAR=U$. For $n>2$ the picture is less clear and the reformulation in terms of symplectic topology appears to give further insight into the freedom of the decomposition.

In addition to the Sinkhorn-type normal form above, in section \ref{sec:derived} we give several reformulations that might be interesting for applications, for instance regarding the decomposition of general $2n-$port linear optics devices into canonical multiports and phase shifters.

\section{Sinkhorn-type normal form} \label{sec:unit}
In order to prove the decomposition theorem, we reformulate the problem of rescaling a unitary matrix into a problem in symplectic topology.  For the reader's convenience, necessary results including elementary calculations and definitions are included in \ref{sec:sympprel}. We only repeat the most important definitions for our reformulation. Recall that the complex projective space $\mathbb{C}P^n$ consists of all equivalence classes of $\mathbb{C}^{n+1}\backslash\{0\}$ w.r.t. $x\sim y\Leftrightarrow x=\lambda y$ with $\lambda\in\mathbb{C}\backslash\{0\}$.
\begin{dfn} \label{dfn:clifftor}
 The \emph{Clifford Torus} is the $n$-dimensional torus embedded in $\mathbb{C}P^n$, i.e. the set of points
\begin{align}
	T^n:=\{[w_0,\ldots, w_n]\in\mathbb{C}P^{n}\big||w_0|=|w_1|=\ldots=|w_n|\}.
\end{align}
\end{dfn}
This torus, as shown in the appendix in proposition \ref{prop:cliff}, is a Lagrangian submanifold of the symplectic manifold $\mathbb{C}P^n$. We obtain the following connection to our normal form:
\begin{lem} \label{lem:reform}
For any unitary $U\in U(n)$, there exist diagonal unitaries $L$ and $R$ such that $A:=LUR$ has row and column sums equal to one if and only if the Clifford torus $T^{n-1}\subset \mathbb{C}P^{n-1}$ fulfills $T^{n-1}\cap UT^{n-1}\neq \emptyset$. 
\end{lem}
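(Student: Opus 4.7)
The strategy is to encode both sum conditions on $A=LUR$ into a single vector equation on unit-modulus vectors in $\mathbb{C}^{n}$, and then project the statement down to $\mathbb{C}P^{n-1}$.

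Write $L=\diag(l_1,\ldots,l_n)$ and $R=\diag(r_1,\ldots,r_n)$ with $|l_i|=|r_i|=1$, and introduce the vectors $r:=R\mathbf{1}=(r_1,\ldots,r_n)^T$ and $\bar l:=L^{\dagger}\mathbf{1}=(\bar l_1,\ldots,\bar l_n)^T$. Both have entries of modulus one, so their projective classes satisfy $[r],[\bar l]\in T^{n-1}$. A direct calculation rewrites the row-sum condition $LUR\mathbf{1}=\mathbf{1}$ as the single vector equation $Ur=\bar l$. A second short calculation using $U^T\bar U=I$ (which holds because $U$ is unitary) shows that this already implies the column-sum condition $\mathbf{1}^T LUR=\mathbf{1}^T$, so for unitary $U$ the two constraints collapse into the single requirement $Ur=\bar l$ with $r,\bar l$ unit-modulus.

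With this reformulation the forward implication is immediate: if such $L,R$ exist then $r$ and $\bar l$ witness $U[r]=[\bar l]\in T^{n-1}\cap UT^{n-1}$. For the converse, pick any $[p]\in T^{n-1}\cap UT^{n-1}$, choose a unit-modulus lift $p$ and a unit-modulus lift $q$ of its $T^{n-1}$-preimage under $U$, so that $Uq=\lambda p$ for some $\lambda\in\mathbb{C}^{\ast}$. Unitarity of $U$ now gives $|\lambda|\sqrt{n}=\|\lambda p\|=\|Uq\|=\|q\|=\sqrt{n}$, whence $|\lambda|=1$; absorbing $\lambda$ into $p$ yields $Uq=p$. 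Setting $r_i:=q_i$ and $l_i:=\bar p_i$ then defines diagonal unitaries $L,R$ with $LUR\mathbf{1}=Lp=\mathbf{1}$, and the column-sum condition follows from the equivalence established above.

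The one genuinely non-trivial step is the rescaling in the converse: an arbitrary projective lift only guarantees that $q$ and $Uq$ have entries of some common absolute value, possibly unequal. It is precisely the norm-preservation of $U$ that forces these two common absolute values to coincide, so that $p$ can be chosen to be unit-modulus and $L,R$ are genuinely unitary. Everything else is routine algebraic bookkeeping.
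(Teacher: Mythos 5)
Your proof is correct and follows essentially the same route as the paper: both reduce the two sum conditions to the single equation $Ur=\bar l$ on unit-modulus vectors (deriving the column-sum condition from the row-sum condition via $A^T\overline{A}=\id$), and both use norm preservation by $U$ to show that the projective intersection in $\mathbb{C}P^{n-1}$ can be lifted to an intersection of the unit-modulus tori in $\mathbb{C}^n$. The only difference is organisational: the paper first proves the equivalence with $\mathbb{T}^n\cap U\mathbb{T}^n\neq\emptyset$ and then passes to the Clifford torus, whereas you merge the two steps.
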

\begin{proof}
Let $U\in U(n)$ be arbitrary but fixed. We first consider the usual torus $\mathbb{T}^{n}\subset \mathbb{C}^{n}$, i.e. the set of all vectors for which each component has modulus one:
\begin{align*}
	\mathbb{T}^n:=\{(e^{i\phi_1},\ldots,e^{i\phi_n})\subset \mathbb{C}^n\,|\,\phi_j\in\mathbb{R}\}
\end{align*}
Let us first show that the existence of a normal form is equivalent to $\mathbb{T}^n\cap U\mathbb{T}^n\neq \emptyset$. For one direction, let $\varphi\in\mathbb{T}^n$ such that $U\varphi\in\mathbb{T}^n$, i.e. $\varphi\in\mathbb{T}^n\cap U\mathbb{T}^n$. Define the two diagonal matrices $R^{-1}:=\diag(\varphi_1,\ldots,\varphi_n)\in U(n)$ and  $L^{-1}:=\diag((U\varphi)_i^{-1})=\diag((\overline{U\varphi})_i)\in U(n)$. With $A:=L^{-1}UR^{-1}$ and $e:=(1,\ldots,1)^{T}$ we obtain:
\begin{align*}
	Ae=L^{-1}U\varphi=e
\end{align*}
Likewise, since $\overline{A}e=Ae$ and $A$ is unitary, we obtain 
\begin{align*}
	A^Te&=A^T\overline{A}e=e.
\end{align*}
so that columns and rows of $A$ sum up to one. 

For the other direction, suppose $U=LAR$ is a decomposition as proposed. Then $\varphi:=R^{-1}e\in\mathbb{T}^n$ and
\begin{align*}
	U\varphi=LAR\varphi=LAe=Le\in\mathbb{T}^n
\end{align*}
hence $U\varphi\in\mathbb{T}^n\cap U\mathbb{T}^n$. 

The next step is to reformulate the problem using the Clifford torus. Clearly, $T^{n-1}\cap UT^{n-1}\neq \emptyset$ iff $(\lambda \mathbb{T}^n)\cap U\mathbb{T}^n\neq \emptyset$ for some $\lambda \in \mathbb{C}\setminus \{0\}$. Since $U$ is norm preserving, any intersection requires $|\lambda|=1$ so that 
\begin{align*}
	T^{n-1}\cap UT^{n-1}\neq \emptyset \quad \Leftrightarrow \quad \mathbb{T}^n\cap U\mathbb{T}^n\neq \emptyset.\end{align*}
\end{proof}

One of the main conjectures in symplectic topology, the Arnold or Arnold-Givental conjecture, states that a Lagrangian submanifold and its image under a Hamiltonian isotopy intersect at least as often as the sum of the $\mathbb{Z}_2$-Betti-numbers. For $T^n$, this sum is not zero, thus, using proposition \ref{prop:unitham}, Arnold's conjecture states in particular that $T^n$ should intersect with $UT^n$ at least once. While the Arnold conjecture is wrong in all generality and most cases are unknown, there is a positive result to the weaker question whether the torus intersects with its displaced version (c.f. \cite{bir04,cho04}). In order to formulate this result, we need the following:
\begin{dfn}
Let $(\mathcal{M},\omega)$ be a closed symplectic manifold with Hamiltonian symplectomorphisms $\mathrm{Ham}(\mathcal{M})$. A Lagrangian submanifold $\mathcal{L}\subset\mathcal{M}$ is called \emph{displaceable} by a Hamiltonian diffeomorphism, if there exists a $\psi\in\mathrm{Ham}(\mathcal{M})$ such that
\begin{align*}
	\mathcal{L}\cap\psi\mathcal{L}=\emptyset.
\end{align*}
\end{dfn}

The definition is slightly different from the one in \cite{bir04}, where the authors only consider nonempty open sets such that the restriction of $\omega$ to these sets is exact. However, they prove that the torus $T^n$ is displaceable in the above definition, if and only if there exists an open neighborhood $\mathcal{V}\supset T^n$ such that $\omega|_{\mathcal{V}}$ is exact and $\mathcal{V}$ is displaceable. With this we can state the final and crucial ingredient in the proof of the normal form:
\begin{thm}[\cite{bir04} theorem 1.3] \label{thm:thm13}
The Clifford torus $T^n\subset\mathbb{C}P^n$ cannot be displaced from itself by a Hamiltonian isotopy.
\end{thm}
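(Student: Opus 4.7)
The statement is a deep result in symplectic topology and, so far as I am aware, there is no elementary proof; every known argument runs through Lagrangian Floer theory. I would follow the approach of \cite{cho04}, which is also the perspective underlying the proof in \cite{bir04}. The overall strategy is to attach to the Clifford torus a homological invariant, its self-Floer cohomology $HF^*(T^n,T^n)$, with three properties: it is invariant under Hamiltonian isotopy of one of the factors, it vanishes whenever $T^n\cap\psi(T^n)=\emptyset$ (since the generating cochain complex is empty), and it is nonzero. These three facts together force non-displaceability.

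The first step is to establish that the Clifford torus is a monotone Lagrangian in $\mathbb{C}P^n$, that is, that the symplectic area and the Maslov index are proportional as functionals on $\pi_2(\mathbb{C}P^n,T^n)$. This is a straightforward check using the toric description of $\mathbb{C}P^n$: the relative $\pi_2$ is generated by the $n+1$ Maslov-$2$ disks through the toric divisors, all of which carry equal symplectic area. Monotonicity, together with the minimal Maslov number being $2$, is what makes Floer cohomology well defined for $T^n$: disk bubbling of index $\leq 0$ is ruled out, so the Floer differential squares to zero after an appropriate count of Maslov-$2$ disks is subsumed into a potential function.

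The second and main computational step is to identify this potential. Counting holomorphic Maslov-$2$ disks on $T^n$ with one interior marked point gives the Landau--Ginzburg superpotential
\[
W(z_1,\dots,z_n) = z_1 + \cdots + z_n + \frac{1}{z_1 z_2\cdots z_n},
\]
one monomial per toric divisor. Cho's criterion then says that a nondegenerate critical point of $W$ in $(\mathbb{C}^*)^n$ produces a local system on $T^n$ whose twisted Floer cohomology $HF^*(T^n,T^n)$ is nontrivial (in fact isomorphic to $H^*(T^n)$). A direct calculation of $\partial W/\partial z_i = 0$ forces $z_1=\cdots=z_n$ with $z_i^{n+1}$ equal to a fixed constant, exhibiting $n+1$ nondegenerate critical points; hence $HF^*(T^n,T^n)\neq 0$, and non-displaceability follows.

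The real obstacle, which is why I would only sketch the argument and ultimately cite \cite{bir04,cho04} as a black box, lies not in the potential calculation but in constructing the invariant on which it acts: setting up Lagrangian Floer cohomology, achieving transversality for the relevant moduli of pseudo-holomorphic strips and disks, controlling bubbling in the monotone setting, and proving Hamiltonian invariance. None of this is feasible to reproduce in a short note, and each piece requires substantial analytic machinery that is by now standard in the field but genuinely nontrivial.
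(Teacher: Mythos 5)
The paper offers no proof of this statement at all---it is quoted verbatim as Theorem 1.3 of \cite{bir04}, with the appendix supplying only the definitions needed to parse it---so your decision to treat the hard Floer-theoretic content as a black box and ultimately cite \cite{bir04,cho04} is exactly what the paper does. Your sketch (monotonicity of the Clifford torus with minimal Maslov number $2$, the superpotential $W=z_1+\cdots+z_n+(z_1\cdots z_n)^{-1}$ with its $n+1$ nondegenerate critical points, hence $HF^*(T^n,T^n)\neq 0$ and non-displaceability) is a faithful summary of Cho's argument in \cite{cho04}; the only inaccuracy is attributing this perspective to \cite{bir04} as well, whose proof instead goes through Calabi quasimorphisms and spectral invariants in the Hamiltonian Floer theory of the ambient $\mathbb{C}P^n$ rather than the Lagrangian Floer cohomology of the torus.
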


Because every unitary matrix defines a Hamiltonian isotopy (see proposition \ref{prop:unitham} in the appendix), the theorem tells us in particular $T^n\cap UT^n\neq \emptyset$ for all unitaries $U\in U(n)$ so that together with lemma \ref{lem:reform} this proves the sought normal form:
\begin{thm} \label{thm:unitarynormal}
For every unitary matrix $U\in U(n)$ there exist two diagonal unitary matrices $L,R\in U(n)$ such that $A:=LUR$ fulfills $\sum_j A_{ji} =\sum_j A_{ij} =1$ for all $i=1,\ldots n$.
\end{thm}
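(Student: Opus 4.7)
The plan is to obtain Theorem \ref{thm:unitarynormal} as a direct corollary of Lemma \ref{lem:reform} and Theorem \ref{thm:thm13}, bridged by the appendix fact (Proposition \ref{prop:unitham}) that every $U\in U(n)$ induces a Hamiltonian symplectomorphism of $\mathbb{C}P^{n-1}$. Concretely, I would fix an arbitrary $U\in U(n)$ and view it as acting on the Fubini--Study manifold $\mathbb{C}P^{n-1}$. Since this action is realised by a Hamiltonian isotopy, and the Clifford torus $T^{n-1}\subset\mathbb{C}P^{n-1}$ is Lagrangian (Proposition \ref{prop:cliff}), the non-displaceability statement of Theorem \ref{thm:thm13} forces $T^{n-1}\cap U T^{n-1}\neq\emptyset$. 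Lemma \ref{lem:reform} then converts such an intersection point into the sought diagonal unitaries $L,R$ with $A:=LUR$ having row and column sums equal to one.

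In terms of execution the remaining work is pure bookkeeping: one verifies that each hypothesis of Theorem \ref{thm:thm13} is met in this concrete setting, i.e.\ that $\mathbb{C}P^{n-1}$ with the Fubini--Study form is a closed symplectic manifold, that $T^{n-1}$ is Lagrangian in it, and that the standard $U(n)$-action is Hamiltonian. All three are standard and are recorded in the appendix, so no analytic or combinatorial subtlety is left at this final step. The argument therefore reduces to one line once the ingredients are in place.

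The conceptually hard input is of course Theorem \ref{thm:thm13} itself, whose proof in \cite{bir04,cho04} rests on Lagrangian Floer cohomology with a carefully chosen bulk or potential deformation that certifies the Clifford torus as non-narrow. From the point of view of the present paper this is a black box, so the real craft lies in recognising the Sinkhorn problem as a Lagrangian intersection question on $\mathbb{C}P^{n-1}$ rather than, say, on $\mathbb{C}^n$ or on a flag manifold. That recognition is exactly the content of Lemma \ref{lem:reform}; once it is accepted, Theorem \ref{thm:unitarynormal} is immediate and the only remaining obstacle is pedagogical, namely to state the combination crisply enough that the reader sees why symplectic topology has anything to say about diagonal rescalings of unitary matrices.
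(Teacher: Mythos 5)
Your proposal is correct and follows exactly the route the paper takes: Proposition \ref{prop:unitham} realises $U$ as a Hamiltonian symplectomorphism of $\mathbb{C}P^{n-1}$, Theorem \ref{thm:thm13} then yields $T^{n-1}\cap UT^{n-1}\neq\emptyset$, and Lemma \ref{lem:reform} converts this intersection point into the diagonal unitaries $L,R$. No gaps; the combination is precisely the paper's one-line derivation of Theorem \ref{thm:unitarynormal}.
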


\section{Equivalent normal forms for unitary matrices} \label{sec:derived}
To obtain equivalent normal forms, consider the $n\times n$ dimensional complex matrix $F_n$ with entries $(F_n)_{kl}:=\frac{1}{\sqrt{n}} \operatorname{exp}(\frac{2\pi i}{n} kl)$ with $k,l\in\{0,\ldots,n-1\}$, which is known as the \emph{discrete Fourier transformation}. It is easy to see that $F_n^{-1}=F^{\dagger}$, hence $F_n\in U(n)$. If we denote the standard basis of $\mathbb{C}^n$ by $\{e_i\}_{i=0}^{n-1}$ and $e:=(1,\ldots,1)^T$, then
\begin{align*}
	F_ne_0=F_n^{\dagger}e_0=\frac{e}{\sqrt{n}}.
\end{align*}
Now let $A\in U(n)$ be such that $Ae=A^Te=e$. Then
$
	F_n^{\dagger}AF_ne_0=e_0
$
and similarly, $(F_n^{\dagger}AF_n)^{T}e_0=F_nA^TF_n^{\dagger}e_0=e_0$, which shows that 
\begin{align*}
	F_n^{\dagger}AF_n=\begin{pmatrix}{} 1 & 0_{n-1}^T \\ 0_{n-1} & \tilde{U} \end{pmatrix}
\end{align*}
where $0_{n-1}:= 0\in\mathbb{C}^{n-1}$ and $\tilde{U}\in U(n-1)$. Thus, given a unitary $U\in U(n)$, we know that there exists a decomposition
\begin{align}
	U=LF_n\begin{pmatrix}{} 1 & 0_{n-1}^T \\ 0_{n-1} & \tilde{U} \end{pmatrix}F_n^{\dagger}R
\end{align}
with $\tilde{U}\in U(n-1)$ and diagonal $L,R\in U(n)$. We can now iterate the procedure by applying it to the $(n-1)\times(n-1)$-dimensional submatrix $\tilde{U}$ and obtain the corollary:
\begin{cor} \label{cor:optics}
Let $U\in U(n)$, then there exist diagonal unitaries $D_1,\ldots,D_n$ and $\tilde{D}_1,\ldots,\tilde{D}_{n-1}$ and a $\varphi\in[0,2\pi)$ such that the first $i-1$ entries in each $D_i,\tilde{D}_i$ are equal to one and
\begin{align}
\begin{split}
	U&=D_1F_nD_2(\id_1\oplus F_{n-1})D_3(\id_2\oplus F_{n-2})\cdots \\
	&~~ D_{n-1}(\id_{n-2}\oplus F_{2})D_n(\id_{n-2}\oplus F_2^{\dagger})\tilde{D}_{n-1}\cdots (\id_1\oplus F_{n-1}^{\dagger})\tilde{D}_2F_n^{\dagger}\tilde{D}_1 e^{i\varphi}.
\end{split}
\end{align}
\end{cor}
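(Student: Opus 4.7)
My plan is to prove the corollary by induction on $n$, using the one-step reduction already derived in the paragraph immediately preceding the statement. The base case $n=1$ is trivial: every $U \in U(1)$ equals $e^{i\varphi}$ for some $\varphi \in [0,2\pi)$, which matches the degenerate instance of the formula (with no Fourier factors and $D_1$ trivial).

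For the inductive step, I would assume the claim in dimension $n-1$ and take an arbitrary $U \in U(n)$. Combining Theorem~\ref{thm:unitarynormal} with the block identity $F_n^{\dagger} A F_n = \id_1 \oplus \tilde{U}$ established in the preamble yields
\[ U = L\, F_n\, (\id_1 \oplus \tilde{U})\, F_n^{\dagger}\, R \]
for suitable diagonal unitaries $L,R \in U(n)$ and some $\tilde{U} \in U(n-1)$. The induction hypothesis provides diagonals $D'_1,\ldots,D'_{n-1}$ and $\tilde{D}'_1,\ldots,\tilde{D}'_{n-2}$ in $U(n-1)$ satisfying the "first $j-1$ entries equal to one" condition, along with a phase $e^{i\varphi}$, such that $\tilde{U}$ factors in the interlaced Fourier-diagonal pattern of the claim in dimension $n-1$. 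Substituting this expansion back into $(\id_1 \oplus \tilde{U})$ produces the full decomposition of $U$.

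What remains is to identify the $n$-dimensional diagonals and verify the structural constraints. I set $D_1 := L$, $\tilde{D}_1 := R$, and for $j \geq 1$ put $D_{j+1} := \id_1 \oplus D'_j$ and $\tilde{D}_{j+1} := \id_1 \oplus \tilde{D}'_j$; correspondingly, each inner Fourier factor $\id_k \oplus F_{n-1-k}$ from the smaller decomposition becomes $\id_{k+1} \oplus F_{n-1-k}$ after the outer $\id_1 \oplus$ padding, matching the pattern of the claim. The key check is that the padded $D_{j+1} = \id_1 \oplus D'_j$ has its first $j$ entries equal to one: the prepended block contributes one leading one, and the first $j-1$ entries of $D'_j$ (which are one by induction) shift to positions $2,\ldots,j$. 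The scalar phase $e^{i\varphi}$ commutes through as a global factor. I do not foresee any real mathematical obstacle; the only care needed is the bookkeeping of the index shift when promoting the $U(n-1)$ decomposition to $U(n)$, since the substantive content of the corollary is already fully captured by the one-step reduction together with Theorem~\ref{thm:unitarynormal}.
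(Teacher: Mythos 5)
Your induction is exactly the paper's (one-line) argument --- iterate the one-step reduction $U = L\,F_n(\id_1\oplus\tilde U)F_n^{\dagger}R$ --- and it is correct. The only blemish is the remark that the scalar phase ``commutes through as a global factor'': it does not commute through the padding, since $\id_1\oplus(e^{i\varphi}\tilde V)=\diag(1,e^{i\varphi},\ldots,e^{i\varphi})\,(\id_1\oplus\tilde V)$, but this stray diagonal has leading entry $1$ and can be absorbed into the adjacent $\tilde D_2$ (or $D_2$) without violating the leading-ones constraint, so the argument goes through (with $\varphi=0$ at the top level if one prefers).
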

\begin{figure}[!t]
  \centering
   \includegraphics[width=0.8\columnwidth]{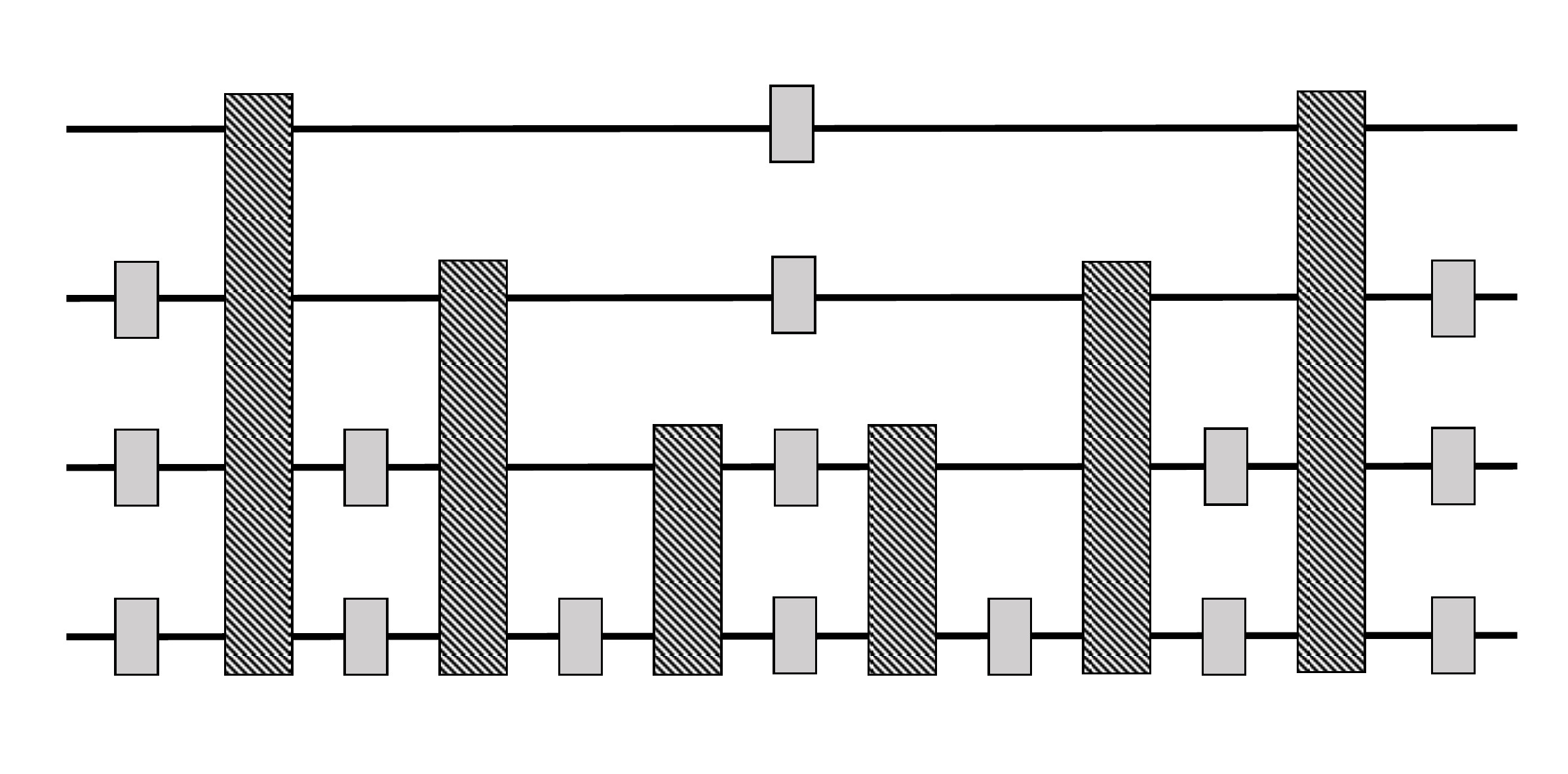}
  \caption{In quantum optics, passive transformations on $n$ modes are in one-to-one correspondence with $n\times n$ unitaries. Each unitary $U$ admits a decomposition into $2(n-1)$ canonical multiports (which are independent of $U$ and described by discrete Fourier transformations [hatched]) surrounded by $2n-1$ layers of single-mode phase shifters [grey]. Here, this is exemplified for $n=4$.}\label{multiport}
\end{figure}
In other words any unitary can be decomposed into diagonal unitaries and discrete Fourier transformations in this way. From the theorem, the first $k-1$ entries of the diagonal matrices $D_k,\tilde{D}_k$ are immediately known to be one. However one can achieve a better parameterisation by realizing that one can fix the first $k$ entries of $D_k,\tilde{D}_k$ to one for $k\leq n-1$, while absorbing all phases of the $k$-th entries of $D_k$ and $\tilde{D}_k$ into a diagonal unitary that replaces $D_n$ (this is immediately clear from a graphical representation as in Figure \ref{multiport}). 

This decomposition has an immediate application in quantum optics, where any $n\times n$ unitary corresponds to a passive transformation on $n$ modes or a $2n-$multiport. In this scenario a diagonal unitary corresponds to a set of phase shifters, which are applied to the modes individually and the discrete Fourier transformation is known as canonical $2n$-multiport \cite{mat95}, which may be implemented by a symmetric fibre coupler. The structure of the corresponding decomposition is graphically depicted in Figure \ref{multiport}.
   
Another version of the normal form is found by using that $D$ is a diagonal matrix iff $FDF^{\dagger}$ is a \emph{circulant matrix}, i.e. $(FDF^{\dagger})_{i,j}=:\alpha_{i-j}\in\mathbb{C}$. Since the diagonal matrices form a group, so do the circulant matrices and we denote the group of $n\times n$ circulant matrices by $\mathrm{Circ}(n)$. Then:
\begin{cor}
Let $U\in U(n)$, then there exist $C_1,C_2\in\mathrm{Circ}(n)$ and $\tilde{U}\in U(n-1)$ such that
\begin{align}
	U=C_1\diag(1,\tilde{U})C_2.
\end{align}
\end{cor}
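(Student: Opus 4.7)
The plan is to apply Theorem~2 to the Fourier conjugate $V := F_n^\dagger U F_n$ rather than to $U$ directly, and then conjugate back by $F_n$ so that the two diagonal factors produced by Theorem~2 automatically become circulant.

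Concretely, Theorem~2 gives diagonal unitaries $L,R$ and a unitary $A$ with $Ae = A^T e = e$ such that $V = L A R$. Inserting $F_n^\dagger F_n = \id$ between the three factors and conjugating by $F_n$ then yields
\[
U = F_n V F_n^\dagger = (F_n L F_n^\dagger)\,(F_n A F_n^\dagger)\,(F_n R F_n^\dagger).
\]
By the diagonal–circulant duality already recalled in the paper, the outer factors $C_1 := F_n L F_n^\dagger$ and $C_2 := F_n R F_n^\dagger$ lie in $\mathrm{Circ}(n)$. It therefore remains to show that the middle factor $F_n A F_n^\dagger$ has block form $\diag(1,\tilde U)$ with $\tilde U \in U(n-1)$.

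This is the same computation as in the paragraph preceding Corollary~1, but run with the opposite conjugation. The inputs are the identities $F_n e_0 = F_n^\dagger e_0 = e/\sqrt{n}$ together with their converses $F_n (e/\sqrt{n}) = F_n^\dagger (e/\sqrt{n}) = e_0$, plus $Ae = A^T e = e$. These give $F_n A F_n^\dagger e_0 = e_0$ directly, and using the symmetry $F_n^T = F_n$ (so that $(F_n A F_n^\dagger)^T = F_n^\dagger A^T F_n$) also $(F_n A F_n^\dagger)^T e_0 = e_0$. Hence the first row and column of $F_n A F_n^\dagger$ are standard basis vectors, which together with unitarity forces the desired block form. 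I do not anticipate any serious obstacle: the only creative step is realising that Theorem~2 should be applied to $F_n^\dagger U F_n$ rather than to $U$, so that its diagonal factors get turned into circulants on conjugation; everything else is mechanical.
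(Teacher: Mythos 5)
Your proposal is correct and matches the route the paper intends (the paper leaves this corollary's proof implicit): apply Theorem~2 to $F_n^\dagger U F_n$, conjugate back so the diagonal factors become circulants, and reuse the block-diagonalisation argument from the preceding paragraph, which goes through verbatim with the conjugation reversed since $F_n e_0=F_n^\dagger e_0=e/\sqrt{n}$ and $F_n^T=F_n$. No gaps.
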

Let us finally discuss the question of uniqueness of these decompositions and to this end come back to the original normal form
\begin{align}
	U=e^{i\varphi}D_1AD_2 \label{eqn:nform},
\end{align}
where $D_1,D_2$ are unitary diagonal with $(D_i)_{11}=1$ and $A$ has row and column sums equal to 1. Counting parameters, using that the matrices $A$ are isomorphic to $U(n-1)$ as proven above, we have:
\begin{align*}
	1+(n-1)+(n-1)^2+(n-1)=n^2
\end{align*}
parameters (c.f. \cite{vos14a}). Hence, the number of parameters matches exactly the dimension of $U(n)$. Given a unitary $U=e^{i\varphi}D_1AD_2$ as above, this means that it might be reasonable to expect only a discrete set of different decompositions or at least a discrete set of $A$ that $U$ can be scaled to. The exact number of different $A$ can easily be seen to be two for the case $n=2$ (c.f. \cite{vos14a}), but already for $n=3$ and $n=4$, there is only a conjectured bound (6 and 20, c.f. \cite{shc13}). 

In \cite{cho04} it is proven that if $T^n$ and $UT^n$ intersect transversally, their number of distinct intersection points must be at least $2^n$, which follows from general results in Floer-homology theory when applied to Lagrangian intersection theory. Since transversality is a generic property for intersections, one might therefore conjecture that for a generic unitary $U\in U(n)$ \cite{cho04} implies a lower bound $2^{n-1}$ on the number of different normal forms. However, it is not true that we always have a discrete number of decompositions or (in contrast to the $2\times 2$ case) at least a discrete number of $A$ such that $A$ has row and column-sums equal to one and $e^{i\varphi}LAR=U$. A counterexample is given by the Fourier transform in $4\times 4$ dimensions, where we have for any $\varphi\in[0,2\pi)$:\footnote{We thank the anonymous referee for providing this counterexample.}

\begin{align}
	\begin{split} \frac{1}{2}
		\begin{pmatrix}{} 1 &  1 &  1 &  1 \\ 
			1 &  i & -1 & -i \\
			1 & -1 &  1 & -1 \\
			1 & -i & -1 &  i
		\end{pmatrix}
		= \begin{pmatrix}{} 1 & 0 & 0 & 0 \\ 
			0 & e^{i\varphi} & 0 & 0 \\ 
			0 & 0 & 1 & 0 \\ 
			0 & 0 & 0 & -e^{-i\varphi} \\
		\end{pmatrix}\cdot \\ \frac{1}{2}
		\begin{pmatrix}{} 1 &  -ie^{i\varphi} &  1 &  ie^{i\varphi} \\ 
			e^{-i\varphi} &  1 & -e^{-i\varphi} & 1 \\
			1 & ie^{i\varphi} &  1 & -ie^{i\varphi} \\
			-e^{-i\varphi} & 1 & e^{i\varphi} &  1
		\end{pmatrix} \cdot
		\begin{pmatrix}{} 1 & 0 & 0 & 0 \\ 
			0 & ie^{-i\varphi} & 0 & 0 \\ 
			0 & 0 & 1 & 0 \\ 
			0 & 0 & 0 & -ie^{i\varphi}
		\end{pmatrix}
	\end{split}
\end{align}

After completion of this document, we learned that part of this section, in particular corollary \ref{cor:optics} were independently found in \cite{vos14b}.

\section{Conclusion}
We have studied a variant of a Sinkhorn type normal form for unitary matrices. Its existence was conjectured in \cite{vos14a} and we give a nonconstructive proof. This means in particular that the question, whether the algorithm presented in \cite{vos14a} always converges for any set of starting conditions, remains open. Also, it would be nice to have an elementary proof of the fact that for any unitary matrix $U$ we have $T^n\cap UT^n\neq \emptyset$. The decomposition is in not unique: We provided an example where, contrary to the $2\times 2$-case, there is a one-parameter set of $A$ as well as $L$ and $R$, such that $LAR=U$. We suggested an argument that the number of different decompositions, if it is discrete, might grow exponentially. However this lower bound relies on a lower bound on Lagrangian intersections which holds only for transversal intersections.

\subsection*{Acknowledgements}
We thank Michael Keyl for many helpful comments on the parts involving symplectic topology. M. Idel is supported by the Studienstiftung des deutschen Volkes. M. Wolf acknowledges support from the CHIST-ERA/BMBF project CQC. 

\bibliographystyle{alpha}
\bibliography{literatur}

\newpage

\appendix

\section{Symplectic Preliminaries} \label{sec:sympprel}
This section introduces the definitions and results from symplectic topology beyond the first chapters of \cite{mcd98} needed to understand the basic reductions of the proof of theorem \ref{thm:thm13} in \cite{bir04}.

\subsection{Notation and basic definitions}
To fix notation, a symplectic manifold will always be denoted by $\mathcal{M}$ and its symplectic form will be called $\omega$. The group of \emph{symplectomorphisms} of a symplectic manifold $(\mathcal{M},\omega)$ will be denoted by $\mathrm{Symp}(\mathcal{M})$ and its \emph{Hamiltonian symplectomorphisms} (i.e. all symplectomorphisms which are elements of the flow of a Hamiltonian vector field) will be denoted by $\mathrm{Ham}(\mathcal{M})$. We have the following characterization (\cite{mcd98}, chapter 10):
\begin{prop}
Let $(\mathcal{M},\omega)$ be a closed symplectic manifold. If the manifold is simply connected (i.e. every loop is contractible)
\begin{align*}
	\mathrm{Ham}(\mathcal{M})=\mathrm{Symp}_0(\mathcal{M})
\end{align*}
where $\mathrm{Symp}_0(\mathcal{M})$ denotes the connected component of the identity of the whole group of symplectomorphisms.
\end{prop}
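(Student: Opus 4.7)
The target statement is the appendix proposition: on a closed simply connected symplectic manifold, $\mathrm{Ham}(\mathcal{M}) = \mathrm{Symp}_0(\mathcal{M})$. The plan is to prove it via the \emph{flux homomorphism} argument, which is the standard route taken in \cite{mcd98}.

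The inclusion $\mathrm{Ham}(\mathcal{M}) \subseteq \mathrm{Symp}_0(\mathcal{M})$ is essentially tautological: the Hamiltonian flow $\psi_t$ of any time-dependent Hamiltonian $H_t$ is a smooth path of symplectomorphisms joining the identity to $\psi_1 \in \mathrm{Ham}(\mathcal{M})$, so $\mathrm{Ham}(\mathcal{M})$ sits in the identity component. I would dispose of this in one line and focus on the reverse inclusion.

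For $\mathrm{Symp}_0(\mathcal{M}) \subseteq \mathrm{Ham}(\mathcal{M})$, pick $\psi \in \mathrm{Symp}_0(\mathcal{M})$ and choose a smooth path $\psi_t$ in $\mathrm{Symp}(\mathcal{M})$ with $\psi_0 = \mathrm{id}$ and $\psi_1 = \psi$. Let $X_t$ be the time-dependent vector field defined by $\frac{d}{dt}\psi_t = X_t \circ \psi_t$. Differentiating $\psi_t^*\omega = \omega$ in $t$ gives $\mathcal{L}_{X_t}\omega = 0$; combining this with $d\omega = 0$ via Cartan's magic formula yields
\begin{equation*}
 d(\iota_{X_t}\omega) = \mathcal{L}_{X_t}\omega - \iota_{X_t}d\omega = 0,
\end{equation*}
so $\alpha_t := \iota_{X_t}\omega$ is a smooth family of closed $1$-forms on $\mathcal{M}$. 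Here one sees the role of simple connectedness: by Hurewicz and the universal coefficient theorem, $\pi_1(\mathcal{M}) = 0$ implies $H^1(\mathcal{M};\mathbb{R}) = 0$, so every closed $1$-form is exact. Hence there exist smooth functions $H_t : \mathcal{M} \to \mathbb{R}$ with $dH_t = \alpha_t$, meaning $X_t$ is the Hamiltonian vector field of $H_t$, and $\psi = \psi_1$ is by construction the time-$1$ map of its flow, i.e. $\psi \in \mathrm{Ham}(\mathcal{M})$.

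The one technical point I would be careful about is the smooth dependence of the primitive $H_t$ on $t$: pointwise existence of $H_t$ is immediate, but for $\psi$ to be genuinely the time-$1$ map of a smooth Hamiltonian flow one needs $(x,t) \mapsto H_t(x)$ to be smooth jointly. This is the main (minor) obstacle, and it is handled by the usual normalization trick on a closed manifold: fix a volume form $\mu$ compatible with $\omega$, and for each $t$ select the unique primitive $H_t$ of $\alpha_t$ with $\int_{\mathcal{M}} H_t\, \mu = 0$. Linearity and uniqueness of this normalization, together with the smooth dependence of $\alpha_t$ on $t$, make $H_t$ smooth in $(x,t)$ by standard elliptic regularity (or by an explicit formula using a Green's operator for the Hodge Laplacian). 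With this in hand the argument closes: every $\psi \in \mathrm{Symp}_0(\mathcal{M})$ is the time-$1$ map of a smooth Hamiltonian flow, establishing the equality.
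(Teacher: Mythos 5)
Your argument is correct, and it is the standard proof of this fact. Note, however, that the paper does not prove this proposition at all: it is quoted verbatim from \cite{mcd98} (chapter 10) as a known result, so there is no in-paper proof to compare against. Your route --- differentiate a path $\psi_t$ in $\mathrm{Symp}_0(\mathcal{M})$ to get a family of symplectic vector fields $X_t$, observe via Cartan's formula that $\iota_{X_t}\omega$ is closed, and use $\pi_1(\mathcal{M})=0\Rightarrow H^1(\mathcal{M};\mathbb{R})=0$ to produce primitives $H_t$ --- is exactly the flux-homomorphism argument underlying the cited result, and your handling of the joint smoothness of $(x,t)\mapsto H_t(x)$ via the mean-zero normalization is adequate (alternatively, $H_t(x)=\int_{x_0}^{x}\alpha_t$ along any path, well defined by simple connectedness, gives joint smoothness directly). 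The only point you pass over silently is that membership in the identity \emph{component} of $\mathrm{Symp}(\mathcal{M})$ yields a \emph{smooth} path to the identity; this requires the standard fact that $\mathrm{Symp}(\mathcal{M})$ is locally (smoothly) path-connected in the $C^\infty$ topology, which is worth at least a remark. With that caveat, the proposal is a complete and correct proof of the statement the paper merely cites.
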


In principle, the result also holds for arbitrary symplectic manifolds. One has to be more careful with non-compactly supported functions, but we can safely ignore these subtleties, since our manifold of interest will be closed.

Furthermore, let us recall that a \emph{Lagrangian submanifold} $\mathcal{L}$ of a $2n$-dimensional symplectic manifold $(\mathcal{M},\omega)$ is a smooth $n$-dimensional submanifold of $\mathcal{M}$ such that 
\begin{align*}
	T_p\mathcal{L}^{\varepsilon}:=\{X\in T_p\mathcal{M}|\omega(X,Y)=0~\forall\,Y\in T_p\mathcal{L}\}=T_p\mathcal{L} \quad \forall p\in\mathcal{L}
\end{align*}

\subsection{The Clifford-torus as a Lagrangian submanifold}
We now study the Clifford torus as a special case of the Lagrangian submanifold of interest for our result.

Before proving that the Clifford torus is a Lagrangian submanifold, we need to specify the symplectic structure on $\mathbb{C}P^n$: Consider the map $\Phi:\mathbb{C}^{n+1}\setminus\{0\}\to \mathbb{S}^{n+1}\subset \mathbb{C}^ {n+1}$ via $z\mapsto z/|z|$. We will show that the pullback $\Phi^*\omega$ of the standard symplectic structure $\omega$ on $\mathbb{C}^{n+1}$ descends to a symplectic form $\omega_{FB}$ on $\mathbb{C}P^n$, the \emph{standard symplectic structure} or \emph{Fubini-Study form} of the complex projective space. 
\begin{prop} \label{prop:cliff}
$\mathbb{C}P^n$, equipped with the Fubini-Study form is a $2n$-dimensional symplectic manifold and the Clifford Torus is a Lagrangian submanifold thereof.
\end{prop}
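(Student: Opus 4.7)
The plan is to produce a symplectic form on $\mathbb{C}P^n$ by descending the standard form from the unit sphere $S^{2n+1}$, and then to verify the Lagrangian property of $T^n$ by one short tangent-space computation on a suitable lift.

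For the symplectic structure, I would work with the Hopf projection $\pi:S^{2n+1}\to\mathbb{C}P^n$; since $\Phi$ restricts to the identity on $S^{2n+1}$ and is $\mathbb{R}_+$-invariant, descent of $\Phi^*\omega$ to $\mathbb{C}P^n$ reduces to descent of $\omega|_{S^{2n+1}}$ along $\pi$. Writing the standard form as $\omega(u,v)=\operatorname{Im}\langle u,v\rangle$ with $\langle\cdot,\cdot\rangle$ the Hermitian inner product on $\mathbb{C}^{n+1}$, two checks suffice. First, the vertical direction of $\pi$ at $z$ is spanned by $iz$, and for $v\in T_zS^{2n+1}$ one has $\operatorname{Re}\langle z,v\rangle=0$, giving $\omega(iz,v)=\operatorname{Im}(-i\langle z,v\rangle)=-\operatorname{Re}\langle z,v\rangle=0$. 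Second, the diagonal $U(1)$-action preserves the Hermitian inner product, hence $\omega$. These two properties imply a unique 2-form $\omega_{FB}$ on $\mathbb{C}P^n$ with $\pi^*\omega_{FB}=\omega|_{S^{2n+1}}$. Closedness follows from $d\omega=0$ together with injectivity of $\pi^*$ on forms, and non-degeneracy follows because the horizontal distribution $H_z=(\mathbb{R}z\oplus\mathbb{R}iz)^{\perp}$ is a complex $n$-plane on which $\omega$ restricts to the imaginary part of a nondegenerate Hermitian form, while $d\pi|_{H_z}$ is a linear isomorphism onto $T_{[z]}\mathbb{C}P^n$. The real dimension $2n$ of $\mathbb{C}P^n$ is standard.

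For the Lagrangian statement, I would lift the Clifford torus to
\[
 \widetilde T = \Big\{\tfrac{1}{\sqrt{n+1}}(e^{i\theta_0},\ldots,e^{i\theta_n}) : \theta\in\mathbb{R}^{n+1}\Big\}\subset S^{2n+1},
\]
whose $\pi$-image is precisely $T^n$, so $\dim T^n=n$ once the scalar $U(1)$-factor is quotiented out. A tangent vector to $\widetilde T$ at such a point has components $u_j=i\dot u_j e^{i\theta_j}/\sqrt{n+1}$, and for two such vectors
\[
 \langle u,v\rangle = \sum_{j=0}^n \overline{u_j}\,v_j = \frac{1}{n+1}\sum_{j=0}^n \dot u_j\,\dot v_j \in\mathbb{R},
\]
so $\omega(u,v)=0$. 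Hence $\omega|_{\widetilde T}=0$, and pushing down through $\pi$ gives $\omega_{FB}|_{T^n}=0$. Combined with $\dim T^n = n = \tfrac12\dim\mathbb{C}P^n$, this makes $T^n$ Lagrangian.

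The only genuine subtlety I anticipate is the bookkeeping linking $\Phi$, $\pi$ and the full $\mathbb{C}^*$-quotient: descending $\Phi^*\omega$ directly to $\mathbb{C}P^n$ requires both $U(1)$-invariance (handled above) and the vanishing of $\Phi^*\omega$ along the radial $\mathbb{R}_+$-direction, the latter following from $d\Phi$ annihilating the radial vector via the chain rule applied to $\Phi(z)=z/|z|$. Once this is clean, the rest is routine; the Lagrangian property of $T^n$ ultimately reduces to the observation that a sum of products of real numbers is real.
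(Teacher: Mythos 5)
Your proof is correct and follows essentially the same route as the paper: both descend the standard symplectic form of $\mathbb{C}^{n+1}$ to the Fubini--Study form and then verify that the natural lift of the Clifford torus is isotropic and half-dimensional. The only differences are cosmetic --- you perform the reduction from $S^{2n+1}$ via the Hopf map and write $\omega=\operatorname{Im}\langle\cdot,\cdot\rangle$, whereas the paper quotients $\mathbb{C}^{n+1}\setminus\{0\}$ by $\mathbb{C}^*$ directly and computes with the vector fields $p_i\partial_{\overline{p}_i}-\overline{p}_i\partial_{p_i}$; the substance of the Lagrangian check (the Hermitian pairing of torus tangent vectors is real, plus a dimension count) is identical.
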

\begin{proof}
Let us go through the construction in more detail and see, how it defines a symplectic form, e.g. a non-degenerate and closed 2-form on $\mathbb{C}P^n$. Throughout, we will consider the natural projection $\pi: \mathbb{C}^{n+1}\setminus\{0\}\to \mathbb{C}P^n$. 

Note that if $(x_0,y_0,\ldots, x_n,y_n)$ are the real coordinates of $\mathbb{R}^{2n+2}\cong \mathbb{C}^{n+1}$, we can use $(z_0,\overline{z}_0,\ldots,z_n,\overline{z}_n)$ as coordinates for any point $(z_0,\ldots,z_n)\in\mathbb{C}^{n+1}$ as well. Then the standard symplectic form reads
\begin{align*}
	\omega=\sum_j dx^j\wedge dy^j=\frac{i}{2}\sum_j dz^j\wedge d\overline{z}^j
\end{align*}
Considering the action of $\mathbb{C}^*$ on $\mathbb{C}^{n+1}$, we obtain $\omega_{\lambda\cdot z}=\frac{i}{2}\sum_j d(\lambda\cdot z^j)\wedge d\overline{(\lambda\cdot z^j)}=|\lambda|^2\frac{i}{2}\sum_j dz^j\wedge d\overline{z}^j=|\lambda|^2\omega_z$. Hence, if $\Phi:\mathbb{C}^{n+1}\setminus \{0\}\to\mathbb{S}^{2n+1}$ is given by $z\mapsto z/|z|$, then $\Phi^*\omega$ will be invariant under the action of $\mathbb{C}^*$. This shows that $\Phi^*\omega$ descends to a well-defined 2-form $\omega_{FS}$ on $\mathbb{C}P^n$, by defining:
\begin{align*}
	(\omega_{FS})_{\pi(p)}(d\pi X_{\pi(p)}, d\pi Y_{\pi(p)})=(\Phi^*\omega)_p(X,Y)
\end{align*}

The next step is to show non-degeneracy. For this, note that $\Phi^*\omega (X,Y)=0~\forall Y$ if and only if $d\Phi X=0$ pointwise, since $\omega$ is non-degenerate. But $d\Phi X=0$ implies in particular $d\pi X=0$ and hence, $\omega_{FS}$ as defined above is a non-degenerate 2-form.

Finally, we need to prove closedness. This can either be computed directly by considering coordinates, or by considering local sections of the projection $\pi$. Let $\{U_i\}_i$ be a cover of $\mathbb{C}P^n$ such that there exist local section $\sigma_i:U_i\to \mathbb{C}^{n+1} \setminus \{0\}$. On each $U_i$ we have $\omega_{FS}=\sigma_i^*\Phi^*\omega$. But then
\begin{align*}
	d\omega_{FS}=d(\sigma_i^*\Phi^*\omega)=(\sigma_i\Phi)^*d\omega=0
\end{align*}
since $d$ commutes with pullbacks and $\omega$ is closed. Since this holds on any patch $U_i$, $d\omega_{FS}=0$ globally. 

In addition, we need to see that the Clifford torus is a Lagrangian submanifold. It is easy to see that the Clifford torus is a submanifold of (real) dimension $n$, hence we only need to prove $(T_pT^n)^\varepsilon=T_pT^n~\forall p\in T^n$. Given the canonical projection $\pi:\mathbb{C}^{n+1}\setminus \{0\}\to \mathbb{C}P^n$, $T^n$ is the image of $\pi$ of the torus
\begin{align*}
	\mathbb{T}^n:=\{(z_0,\ldots, z_n)||z_0|=|z_1|=\ldots=|z_n|=1\}
\end{align*}
By inspection, we obtain for $p=(p_0,\ldots,p_n)\in\mathbb{C}^{n+1}\setminus \{0\}$:
\begin{align*}
	T_p\mathbb{T}^n=\operatorname{span}\{p_i\partial_{\overline{p}_i}-\overline{p}_i\partial_{p_i}|i=0,\ldots,n\}=:\operatorname{span}\{X^i_p|i=0,\ldots,n\}
\end{align*}
Then $T_{\pi (p)} T^n$ will be spanned by $d\pi X^i_{\pi (p)}$. 

Now, since already on the level of $\omega$, we have $\omega_p(X^i_p,X^j_p)=0$ for all $i,j\in\{0,\ldots,n\}$ and all $p\in\mathbb{C}^{n+1}\setminus\{0\}$, it is immediate that $(\omega_{FS})_{\pi(p)}(\pi_*X^i_p,\pi_*X^j_p)=0$ for all $i,j$ and for all $\pi(p)\in\mathbb{C}P^n$. Hence we have that $(T_pT^n)^\varepsilon\supseteq T_pT^n~\forall p\in T^n$. Since equality then has to hold by dimensional analysis, we have $T^n$ is a Lagrangian submanifold.
\end{proof}

Now consider the standard action of $U\in U(n+1)$ on $\mathbb{C}^{n+1}$. Note that $U$ leaves $\omega$ invariant, since $\sum_i d(Uz)^i\wedge d\overline{Uz}^i=\sum_{ijk} U_{ij}\overline{U}_{ik} dz^j\wedge d\overline{z}^k=\sum_i dz^i\wedge d\overline{z}^i$. Furthermore, since $U$ leaves the norm invariant by definition, we have that $U^*\omega_{FS}=\omega_{FS}$, where $U^*$ is the pullback associated with the map $U$. This means that any unitary $U\in U(n+1)$ corresponds to a symplectomorphism of $\mathbb{C}P^{n}$. Since it is well-known that the complex projective space is simply connected and closed, its Hamiltonian symplectomorphism correspond to its symplectomorphism. Hence:
\begin{prop} \label{prop:unitham}
We have $U(n+1)\subset \mathrm{Ham}(\mathbb{C}P^n,\omega_{FS})$, where the identification is achieved by considering the standard action of $U$ on $\mathbb{C}^{n+1}$.
\end{prop}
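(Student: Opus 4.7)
The plan is to verify the three ingredients that the paragraph preceding the proposition already hints at: that the standard action of $U\in U(n+1)$ on $\mathbb{C}^{n+1}$ descends to a well-defined diffeomorphism of $\mathbb{C}P^n$, that this induced diffeomorphism preserves the Fubini-Study form $\omega_{FS}$, and that every such symplectomorphism is in fact Hamiltonian.

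First, I would observe that since $U$ is linear it sends complex lines through the origin to complex lines through the origin, so $\pi \circ U$ factors through $\pi$ and yields a smooth map $\tilde U : \mathbb{C}P^n \to \mathbb{C}P^n$ satisfying $\tilde U \circ \pi = \pi \circ U$. Next, I would check that $U$ preserves the standard form $\omega$ on $\mathbb{C}^{n+1}$; in coordinates this is the one-line computation
\begin{align*}
  U^*\omega = \tfrac{i}{2}\sum_i d(Uz)^i\wedge d\overline{(Uz)}^i = \tfrac{i}{2}\sum_{i,j,k} U_{ij}\overline{U}_{ik}\, dz^j\wedge d\overline{z}^k = \tfrac{i}{2}\sum_j dz^j\wedge d\overline{z}^j = \omega,
\end{align*}
where $\sum_i U_{ij}\overline{U}_{ik}=\delta_{jk}$ is unitarity. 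Because $U$ also preserves the norm, $U$ commutes with the radial map $\Phi(z)=z/|z|$ up to the induced action on $\mathbb{S}^{2n+1}$, so $U^*(\Phi^*\omega) = \Phi^*\omega$. Combining this with the construction of $\omega_{FS}$ from $\Phi^*\omega$ in the proof of Proposition \ref{prop:cliff}, one concludes $\tilde U^*\omega_{FS}=\omega_{FS}$, so $\tilde U\in\mathrm{Symp}(\mathbb{C}P^n,\omega_{FS})$.

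The final step is to upgrade ``symplectomorphism'' to ``Hamiltonian symplectomorphism''. Since $U(n+1)$ is path-connected, the path $t\mapsto \widetilde{\exp(tH)U}$ (for any $H$ with $\exp(H)=\id$, or more simply any continuous path from $\id$ to $U$ in $U(n+1)$) connects $\tilde U$ to the identity inside $\mathrm{Symp}(\mathbb{C}P^n)$, so $\tilde U \in \mathrm{Symp}_0(\mathbb{C}P^n)$. Now I invoke the proposition cited in the appendix: because $\mathbb{C}P^n$ is closed and simply connected, $\mathrm{Ham}(\mathbb{C}P^n) = \mathrm{Symp}_0(\mathbb{C}P^n)$, and therefore $\tilde U$ lies in $\mathrm{Ham}(\mathbb{C}P^n,\omega_{FS})$. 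This establishes the inclusion $U(n+1)\subset \mathrm{Ham}(\mathbb{C}P^n,\omega_{FS})$.

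The only mildly nontrivial point is verifying that the descent to $\mathbb{C}P^n$ really produces a symplectomorphism, since $\omega_{FS}$ was defined via the two-step construction $\mathbb{C}^{n+1}\setminus\{0\}\xrightarrow{\Phi}\mathbb{S}^{2n+1}\xrightarrow{}\mathbb{C}P^n$; one must observe that the $\mathbb{C}^*$-equivariance of $U$ is compatible with this construction. Everything else is either a direct unitarity computation or a citation of the already-quoted result about simply connected symplectic manifolds, so no serious obstacle is expected.
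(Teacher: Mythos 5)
Your proposal is correct and follows essentially the same route as the paper: the unitarity computation showing $U^*\omega=\omega$, norm preservation to descend to a symplectomorphism of $(\mathbb{C}P^n,\omega_{FS})$, and the identification $\mathrm{Ham}(\mathbb{C}P^n)=\mathrm{Symp}_0(\mathbb{C}P^n)$ for the closed, simply connected projective space. You are in fact slightly more careful than the paper in making explicit that path-connectedness of $U(n+1)$ is what places $\tilde U$ in $\mathrm{Symp}_0$ rather than merely in $\mathrm{Symp}$.
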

\end{document}